\newcommand{\viable}{{\mathsf {Viable}}}
\newcommand{\reachable}{{\mathsf {Reachable}}}
\newcommand{\extend}{{\mathsf {Extend}}}
\newcommand{\basecheck}{{\mathsf {BaseCheck}}}
\newcommand{\extendcheck}{{\mathsf {ExtendCheck}}}
\mathchardef\mhyphen="2D
\newenvironment{requirement}
{\vspace{0.05in}
 \begin{mdframed}[roundcorner=10pt,backgroundcolor=gray!20]}
{\end{mdframed}}
\begin{document}

\title{Towards Realizability Checking of Contracts using Theories}%
\author{Andrew Gacek\inst{1}, Andreas Katis\inst{2}, Michael W. Whalen\inst{2}, John Backes\inst{1}, Darren Cofer\inst{1}}%
\institute{Rockwell Collins Advanced Technology Center\\
400 Collins Rd. NE, Cedar Rapids, IA, 52498, USA\\
\email{\{andrew.gacek,john.backes,darren.cofer\}@rockwellcollins.com}
 \and
Department of Computer Science and Engineering,\\
 University of Minnesota, 200 Union Street, Minneapolis, MN 55455,USA\\
\email{katis001@umn.edu, whalen@cs.umn.edu}
}
\date{}%

\maketitle

\begin{abstract}
{\em Virtual integration} techniques focus on building architectural models of systems that can be analyzed early in the design cycle to try to lower cost, reduce risk, and improve quality of complex embedded systems.  Given appropriate architectural descriptions and compositional reasoning rules, these techniques can be used to prove important safety properties about the architecture prior to system construction.  Such proofs build from ``leaf-level'' assume/guarantee component contracts through architectural layers towards top-level safety properties.  The proofs are built upon the premise that each leaf-level component contract is {\em realizable}; i.e., it is possible to construct a component such that for any input allowed by the contract assumptions, there is some output value that the component can produce that satisfies the contract guarantees.
Without engineering support it is all too easy to write leaf-level components that can't be realized.  Realizability checking for propositional contracts has been well-studied for many years, both for component synthesis and checking correctness of temporal logic requirements.  However, checking realizability for contracts involving infinite theories is still an open problem.  In this paper, we describe a new approach for checking realizability of contracts involving theories and demonstrate its usefulness on several examples.

\end{abstract}

\vspace{-0.1in}
\section{Introduction}
\vspace{-0.1in}

In the recent years, {\em virtual integration} approaches have been proposed as a means to lower cost and improve quality of complex embedded systems.  These approaches focus on building architectural models of systems that can be analyzed prior to construction of component implementations.   The objective is to discover and resolve problems early during the design and implementation phases when cost impact is lower.  Several architecture description languages such as AADL~\cite{SAE:AADL}, SysML~\cite{Friedenthal08:sysml}, and AUTOSAR~\cite{AUTOSAR} are designed to support such an engineering process, and there has been significant effort to analytically determine system performance~\cite{Gomez11:AADL,Bozzano:2011:AADL}, fault tolerance~\cite{Bozzano:2011:AADL}, security~\cite{Apvrille13:security}, and safety~\cite{Bozzano14:AADL} using these techniques.

In an ongoing effort at Rockwell Collins and The University of Minnesota, we have been pursuing virtual integration using compositional proofs of correctness.  The idea is to support hierarchical design and analysis of complex system architectures and co-evolution of requirements and architectures at multiple levels of abstraction~\cite{Whalen13:WhatHow:TwinPeaksIEEESoftware}.  This was based on two observations about software development for commercial aircraft: first, that component-level errors are relatively rare and that most problems occur during integration~\cite{rushby2011new}, and second, that requirements specifications often contain significant numbers of omissions or errors~\cite{Miller06:Shalls} that are at the root of many of the integration problems.  Specifically, the problem involves demonstrating {\em satisfaction arguments}~\cite{Hammond01:WiW}, i.e., that the requirements allocated to components and the architecture connecting those components is sufficient to guarantee the system requirements.  We have created the AGREE
reasoning framework~\cite{NFM2012:CoGaMiWhLaLu} to support compositional assume/guarantee contract reasoning
over system architectural models written in AADL.

Such proof systems build from ``leaf-level'' assume/guarantee component contracts through architectural layers towards proofs of top-level safety properties.  The soundness of the argument is built upon the premise that each leaf-level component contract is {\em realizable}; i.e., it is possible to construct a component such that for any input allowed by the contract assumptions, there is some output value that the component can produce that satisfies the contract guarantees.

Unfortunately, without engineering support it is all too easy to write leaf-level components that can't be realized.  When applying our tools in both industrial and classroom settings, this issue has led to incorrect compositional ``proofs'' of systems; in fact the goal of producing a compositional proof can lead to engineers modifying component-level requirements such that they are no longer possible to implement. In order to make our virtual integration approach reasonable for practicing engineers, tool support must be provided to check whether components are {\em realizable}.

Realizability checking for propositional contracts has been well-studied for many years (e.g.,~\cite{Pnueli89,Bohy12,Hamza10,Chatterjee07}), both for component synthesis and checking correctness of temporal logic requirements.  Checking realizability for contracts involving theories, on the other hand, is still an open problem.  In this paper, we describe a new approach for checking realizability of contracts involving theories and demonstrate its usefulness on several examples.  Our approach is similar to k-induction over quantified formulas.  We describe two algorithms.  The first is sound for both proofs and counterexamples, but computationally intractable.  The second algorithm is not sound for counterexamples (i.e., it may return a `false counterexample' to a problem that is in fact realizable), but we have found it fast and accurate in practice.

The rest of the paper is structured as follows.  In Section~\ref{sec:motivation} we will describe our motivation and an example to illustrate realizability, and will define
realizability formally in Section~\ref{sec:realizability}.  We next describe two algorithms for checking realizability in Section~\ref{sec:algorithm}, our implementation in the AGREE tool suite in Section~\ref{sec:implementation}, and our experience using the realizability check in Section~\ref{sec:case_studies}.  Section~\ref{sec:related_work} describes related work and Section~\ref{sec:future_work} concludes.

\vspace{-0.1in}
\section{Motivation and Example}
\vspace{-0.1in}
\label{sec:motivation}
We have been pursuing a {\em proof-based virtual integration} approach for building complex systems using the architecture description language AADL~\cite{SAE:AADL} and the AGREE compositional reasoning system~\cite{NFM2012:CoGaMiWhLaLu}.  We have demonstrated the effectiveness of the approach on a variety of industrial-scale systems, including the software controller for a patient-controlled analgesia (PCA) infusion pump~\cite{hilt2013}, a dual flight-guidance system~\cite{NFM2012:CoGaMiWhLaLu}, and several more recent models, such as a quad-redundant flight control system and a quadcopter control system.  We are using this approach on the DARPA HACMS program to build secure vehicles and to demonstrate how to apply virtual integration on industrial scale systems to facilitate technology transfer.

As part of the HACMS project, we attempted a feasibility test via a classroom exercise.  We used the AADL and AGREE tools in a class assignment in a graduate-level software architecture class.  
The students were organized into six teams of four students.  Each team was asked to specify the control software for a simplified microwave oven in AADL using a virtual integration approach.  The software was split into two subsystems: one for controlling the heating element and another for controlling the display panel, with several requirements for each subsystem.  The goal was to formalize these component-level requirements and use them to prove three system-level safety requirements.

The results of the initial experiment were sobering.  All student groups were
able to prove the system-level requirements starting from formalizations of the component requirements.  Unfortunately, in many cases, the proofs succeeded because the components were incorrectly specified.  In fact, only one of the teams had written component-level requirements that could be implemented.  The other teams had requirements which were inconsistent under certain input conditions.  For example, one team produced the following informal component-level requirements:
\begin{requirement}
\textbf{Microwave-1} - While the microwave is in cooking mode,
{\sf seconds\_to\_cook} shall decrease.
\label{req:mic1}
\end{requirement}
\begin{requirement}
\textbf{Microwave-2} - If the display is quiescent (no buttons pressed) and the
keypad is enabled, the {\sf seconds\_to\_cook} shall not change.
\label{req:mic2}
\end{requirement}
and then produced the following formalized
requirements\footnote{We have translated this property and others from the
  higher level AGREE syntax into a two-state form that is used
  throughout this paper.}:
\begin{align*}
\mathbf{guarantee} &\colon \mathsf{is\_cooking'}
\Rightarrow \mathsf{seconds\_to\_cook'} \leq \mathsf{seconds\_to\_cook} -1 \\[5pt]
\mathbf{guarantee} &\colon
  (\neg \mathsf{any\_digit\_pressed} \land \mathsf{keypad\_enabled})
\Rightarrow \\
&\hspace{2em} \mathsf{seconds\_to\_cook'} = \mathsf{seconds\_to\_cook}
\end{align*}
%
These formalized guarantees fail to avoid the conflict in the {\sf
  seconds\_to\_cook} variable between the
~\hyperref[req:mic1]{Microwave-1} and~\hyperref[req:mic2]{Microwave-2}
requirements, as they cannot be both satisfied in a case where the
microwave is cooking and the keypad is enabled. This error was not
caught despite an analysis built into an early version of AGREE that
checks contracts for {\em consistency}, i.e., whether the conjunction
of a system's guarantees is satisfiable. We realized that consistency
checking does not actually provide a trustworthy answer because it
only checks whether the system works in {\em some} external
environment, not in {\em all} environments. Realizability checking
determines whether or not the component works in all input
environments that satisfy the component assumptions.

From this experience, we decided that realizability checking was
necessary for successful tech transfer of a virtual integration
approach. The analysis was not only necessary for classroom settings.
We also found problems with component-level requirements in two of our
large-scale analysis efforts. Further, existing approaches for
checking realizability do not allow predicates over infinite theories
such as integers and reals, which are native to our AGREE contracts.

In the following sections, we formally define realizability over
transition systems, as well as algorithms for checking realizability
over infinite-state systems that are efficient and accurate in
practice. A machine-checked formalization of the definitions and
proofs in Coq can be found in a companion paper \cite{Katis:machine}.



\newcommand{\state}{\mathsf{state}}
\newcommand{\inputty}{\mathsf{input}}
\newcommand{\bool}{\mathsf{bool}}

\vspace{-0.1in}
\section{Realizability}
\label{sec:realizability}
\vspace{-0.1in}

We assume the types $\state$ and $\inputty$ for states and inputs. We use
$s$ for variables of type $\state$ and $i$ for variables of type
$\inputty$. State represents both internal state and external outputs. A
transition system is a pair $(I, T)$ where $I : \state \to \bool$ holds
on the initial states states and $T : \state \times \inputty \times \state
\to \bool$ holds on $T(s, i, s')$ when the system can transition from
state $s$ to state $s'$ on receipt of input $i$. We assume the usual
notion of path with respect to a transition relation.

A contract specifies the desired behavior of a transition system. A
contract is a pair $(A, G)$ of an assumption and a guarantee. The
assumption $A : \state \times \inputty \to \bool$ specifies for a given
system state which inputs are valid. The guarantee $G$ is a pair
$(G_I, G_T)$ of an initial guarantee and a transitional guarantee. The
initial guarantee $G_I : \state \to \bool$ specifies which states the
system may start in, that is, the possible initial internal state and
external outputs. The transitional guarantee $G_T : \state \times
\inputty \times \state \to \bool$ specifies for a given state and input
what states the system may transition to.

We now define what it means for a transition system to realize a
contract. This requires that the system respects the guarantee for
inputs which satisfying the contract. Moreover, the system must always
remain responsive with respect to inputs that satisfying the
assumptions. In order to make this definition precise, we first need
to define which system states are reachable given some assumptions on
the system inputs.

\begin{definition}[Reachable with respect to assumptions]
\label{def:reachable}
Let $(I, T)$ be a transition system and let $A : \state \times \inputty
\to \bool$ be an assumption. A state of $(I, T)$ is reachable with
respect to $A$ if there exists a path starting in an initial state and
eventually reaching $s$ such that all transitions satisfying the
assumptions. Formally, $\reachable_A(s)$ is defined inductively by
\begin{equation*}
  \reachable_A(s) = I(s) \lor \exists s_{\mathsf{prev}}, i.~ \reachable_A(s_{\mathsf{prev}}) \land
  A(s_{\mathsf{prev}}, i) \land T(s_{\mathsf{prev}}, i, s)
\end{equation*}
\end{definition}

\begin{definition}[Realization]
\label{def:realization}
  A transition system $(I, T)$ is a realization of the contract $(A,
  (G_I, G_T))$ when the following conditions hold
  \begin{enumerate}
  \item $\forall s.~ I(s) \Rightarrow G_I(s)$
  \item $\forall s, i, s'.~ \reachable_A(s) \land A(s, i) \land T(s, i,
    s') \Rightarrow G_T(s, i, s')$
  \item $\exists s.~ I(s)$
  \item $\forall s, i.~ \reachable_A(s) \land A(s, i) \Rightarrow
    \exists s'.~ T(s, i, s')$
  \end{enumerate}
\end{definition}

The first two conditions in Definition~\ref{def:realization} ensure
that the transition system respects the guarantees. The second two
conditions ensure that the system is non-trivial and responsive to all
valid inputs.

\begin{definition}[Realizable]
\label{def:realizable}
A contract is realizable if there exists a transition system which is
a realization of the contract.
\end{definition}

Definitions~\ref{def:realization} and \ref{def:realizable} are useful
for directly defining realizability, but not very useful for checking
realizability. We now develop an equivalent notion which is more
suggestive and amenable to checking. This is based on a notion called
{\em viability}. Intuitively, a state is viable with respect to a
contract if being in that state does not doom a realization to
failure. We can capture this notion without reference to any specific
realization, because condition 2 in the definition of realization
tells us that $G_T$ is an over-approximation of any $T$.

\begin{definition}[Viable] A state $s$ is {\em viable} with respect to
  a contract $(A, (G_I, G_T))$, written $\viable(s)$, if $G_T$ can
  keep responding to valid inputs forever, starting from $s$.
  Informally, one can say that a state $s$ is viable if it satisfies
  the infinite formula:
\begin{equation*}
\forall i_1.~ A(s, i_1) \Rightarrow \exists s_1.~ G_T(s, i_1, s_1) \land
\forall i_2.~ A(s_1, i_2) \Rightarrow \exists s_2.~ G_T(s_1, i_2, s_2)
\land \forall i_3.~ \cdots
\end{equation*}
Formally, viability is defined coinductively by the following equation
\begin{equation*}
\viable(s) = \forall i.~ A(s, i) \Rightarrow \exists s'.~ G_T(s, i, s') \land \viable(s')
\end{equation*}
\end{definition}

\begin{theorem}[Alternative realizability]
  \label{thm:alt-realizable}
  A contract $(A, (G_I, G_T))$ is realizable if and only if $\exists s.~ G_I(s)
  \land \viable(s)$.
\end{theorem}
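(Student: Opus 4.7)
The plan is to prove both implications of the biconditional separately, with the coinductive character of $\viable$ doing most of the work.

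For the forward direction, assume $(I, T)$ is a realization of $(A, (G_I, G_T))$. By condition~3 of Definition~\ref{def:realization} pick $s_0$ with $I(s_0)$; then condition~1 gives $G_I(s_0)$, so it remains to establish $\viable(s_0)$. I would prove the stronger statement $\forall s.~ \reachable_A(s) \Rightarrow \viable(s)$ by coinduction, taking $\reachable_A$ itself as the coinductive witness. The obligation is: whenever $\reachable_A(s)$ and $A(s,i)$ hold, produce $s'$ with $G_T(s,i,s')$ and $\reachable_A(s')$. Condition~4 supplies some $s'$ with $T(s,i,s')$; condition~2 converts $T$ into $G_T$; and $\reachable_A(s')$ follows from the inductive clause in Definition~\ref{def:reachable}. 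Specializing to $s_0$ (which is reachable by the base case $I(s_0)$) then yields $\viable(s_0)$.

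For the backward direction, assume $s_0$ with $G_I(s_0) \land \viable(s_0)$ and construct a realization explicitly by
\begin{align*}
  I(s) &\;:=\; G_I(s) \land \viable(s), \\
  T(s, i, s') &\;:=\; G_T(s, i, s') \land \viable(s').
\end{align*}
Conditions~1 and~3 are immediate from the definitions and the hypothesis on $s_0$. The key lemma, proved by induction on $\reachable_A(s)$ using the definitions of $I$ and $T$, is that every reachable state is viable. With this lemma, conditions 2 and 4 follow by unfolding $\viable(s)$ once: given $\reachable_A(s) \land A(s,i)$, viability of $s$ yields $s'$ with $G_T(s,i,s') \land \viable(s')$, which witnesses both $T(s,i,s')$ (condition~4) and $G_T(s,i,s')$ whenever $T$ already holds (condition~2).

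The main obstacle is treating viability rigorously. Because $\viable$ is coinductive, the forward direction cannot proceed by simple induction on the structure of a path; the right move is to recognize $\reachable_A$ as a post-fixed point of the functional that defines $\viable$ and invoke the coinduction principle. Symmetrically, in the backward direction, one must resist trying to unfold $\viable$ into an explicit bound on path length and instead exploit its self-referential definition directly when defining $T$. Once the coinductive reasoning is in place, both directions are short and mechanical, and no further theory machinery is needed.
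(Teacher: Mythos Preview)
Your proposal is correct and follows essentially the same route as the paper: the forward direction uses the coinductive lemma $\reachable_A(s) \Rightarrow \viable(s)$ from conditions~2 and~4, and the backward direction builds a realization with $T(s,i,s') := G_T(s,i,s') \land \viable(s')$, then proves the same reachable-implies-viable lemma inductively. The only cosmetic difference is that the paper takes $I(s) := (s = s_0)$ rather than your $I(s) := G_I(s) \land \viable(s)$; both choices work, and the rest of the argument is identical.
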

\begin{proof}
For the ``only if'' direction the key lemma is $\forall s.~ \reachable_A(s)
\Rightarrow \viable(s)$. This lemma is proved by coinduction and
follows directly from conditions 2 and 4 of
Definition~\ref{def:realization}. Then by conditions 1 and 3 we have
some state $s$ such that $I(s)$ and $G_I(s)$. Thus $\reachable_A(s)$
holds and applying the lemma we get $G_I(s) \land \viable(s)$.

For the ``if'' direction, let $s_0$ be such that $G_I(s_0)$ and
$\viable(s_0)$. Define $I(s) = (s = s_0)$ and $T(s,i,s') = G_T(s, i,
s') \land \viable(s')$. Conditions 1, 2, and 3 of
Definition~\ref{def:realization} are clearly satisfied. Condition 4
follows from the observation that $\forall s.~ \reachable_A(s)
\Rightarrow \viable(s)$ and from the definition of viability.
\end{proof}



\vspace{-0.1in}
\section{An Algorithm for Checking Realizability}
\label{sec:algorithm}
\vspace{-0.1in}

In this section we develop two versions of an algorithm for
automatically checking the realizability of a contract. The first
version is based on Theorem~\ref{thm:alt-realizable} together with
under- and over-approximations of viability. An over-approximation is
useful to show that a contract is not viable, while an
under-approximation is useful to show that a contract is viable. The
second version of the algorithm follows from the mitigating the
intractability of the first version.

We first define an over-approximation of viability called {\em finite
  viability} based on a finite unrolling of the definition of
viability. Because this is an over-approximation, if a contract does
not have an initial state which is finitely viable, then the contract
is not viable. We formalize this when we prove the correctness of the
realizability algorithm.

\begin{definition}[Finite viability]
A state $s$ is viable for $n$ steps, written $\viable_n(s)$ if $G_T$
can keep responding to valid inputs for at least $n$ steps. That is,
\begin{multline*}
\forall i_1.~ A(s, i_1) \Rightarrow \exists s_1.~ G_T(s, i_1, s_1) \land~ \\
\forall i_2.~ A(s_1, i_2) \Rightarrow \exists s_2.~ G_T(s_1, i_2, s_2)
\land \cdots \land~ \\
\forall i_n.~ A(s_{n-1}, i_n) \Rightarrow \exists s_n.~ G_T(s_{n-1}, i_n, s_n)
\end{multline*}
All states are viable for 0 steps.
\end{definition}

We next define an under-approximation of viability based on {\em
  one-step extension}. This notion looks if $G_T$ can respond to valid
inputs given a finite historical trace of valid inputs and states.

\begin{definition}[One-step extension]
A state $s$ is extendable after $n$ steps, written $\extend_n(s)$, if
any valid path of length $n$ from $s$ can be extended in response to
any input. That is,
\begin{multline*}
\forall i_1, s_1, \ldots, i_n, s_n.\\ A(s, i_1) \land G_T(s, i_1, s_1)
\land \cdots \land
A(s_{n-1}, i_n) \land G_T(s_{n-1}, i_n, s_n)
\Rightarrow \\
\forall i.~ A(s_n, i) \Rightarrow \exists s'.~ G_T(s_n, i, s')
\end{multline*}
\end{definition}

We now use these two notions to formally define our realizability
algorithm. The core of the algorithm is based on two checks called the
{\em base} and {\em extend} check.

\begin{definition}[Realizability Algorithm]
Define the checks:
  \label{def:algorithm1}
  \begin{align*}
    \basecheck(n) &= \exists s.~ G_I(s) \land \viable_n(s) \\
    \extendcheck(n) &= \forall s.~ \extend_n(s)
  \end{align*}
The following algorithm checks for realizability or unrealizability
of a contract.
\newpage
\begin{algorithmic}
  \FOR{$n=0$ to $\infty$}
    \IF{\NOT $\basecheck(n)$}
      \RETURN{``unrealizable''}
    \ELSIF{$\extendcheck(n)$}
      \RETURN{``realizable''}
    \ENDIF
  \ENDFOR
\end{algorithmic}
\end{definition}

\begin{theorem}[Soundness of ``unrealizable'' result]
If $\exists n.~ \neg\basecheck(n)$ then the contract is not
realizable.
\end{theorem}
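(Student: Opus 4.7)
The plan is to combine Theorem~\ref{thm:alt-realizable} with a short lemma that relates genuine (coinductive) viability to its finite approximation, and then argue by contrapositive. Specifically, by Theorem~\ref{thm:alt-realizable}, realizability is equivalent to the existence of some state $s$ with $G_I(s) \land \viable(s)$. So to show the contract is not realizable, it suffices to show no such witness exists. The failure of $\basecheck(n)$ says there is no state satisfying $G_I(s) \land \viable_n(s)$. Thus the argument reduces to showing $\viable(s) \Rightarrow \viable_n(s)$ for every $n$ and every $s$.

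The key lemma I would prove is: for all $n \geq 0$ and all states $s$, $\viable(s) \Rightarrow \viable_n(s)$. I would prove this by induction on $n$. The base case $n=0$ is immediate since all states are viable for $0$ steps by definition. For the inductive step, assume the lemma for $n$ and suppose $\viable(s)$. Unfolding the coinductive definition once gives, for every $i_1$ with $A(s, i_1)$, some $s_1$ with $G_T(s, i_1, s_1)$ and $\viable(s_1)$. Applying the induction hypothesis to $s_1$ yields $\viable_n(s_1)$, which, when expanded, is exactly the $n$-step tail condition required by $\viable_{n+1}(s)$. Hence $\viable_{n+1}(s)$ holds.

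With the lemma in hand, the theorem follows by contraposition. Suppose the contract is realizable; by Theorem~\ref{thm:alt-realizable} there is some $s$ with $G_I(s) \land \viable(s)$. By the lemma, $\viable_n(s)$ holds for every $n$, so $G_I(s) \land \viable_n(s)$ holds, meaning $\basecheck(n)$ holds for every $n$. The contrapositive is precisely the statement: if $\neg \basecheck(n)$ for some $n$, the contract is not realizable.

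I do not expect any serious obstacle. The only subtle point is that $\viable$ is defined coinductively while $\viable_n$ is defined by a fixed finite unfolding, so the induction must match the layered quantifier structure of $\viable_n$ carefully. Once the correspondence between ``unfolding $\viable$ once'' and ``peeling one quantifier block off $\viable_{n+1}$'' is made explicit, the induction on $n$ is mechanical, and the appeal to Theorem~\ref{thm:alt-realizable} closes the argument in one line.
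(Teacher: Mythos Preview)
Your proposal is correct and follows essentially the same approach as the paper: prove $\forall s, n.~\viable(s) \Rightarrow \viable_n(s)$ by induction on $n$, then conclude via Theorem~\ref{thm:alt-realizable}. Your write-up simply spells out the induction step and the contrapositive more explicitly than the paper's terse one-line version.
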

\begin{proof}
First we show $\forall s, n.~ \viable(s) \Rightarrow \viable_n(s)$ by
induction on $n$. The result then follows from
Theorem~\ref{thm:alt-realizable}.
\end{proof}

\begin{theorem}[Soundness of ``realizable'' result]
If $\exists n.~ \basecheck(n) \land \extendcheck(n)$ then contract is
realizable.
\end{theorem}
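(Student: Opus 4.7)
The plan is to invoke Theorem~\ref{thm:alt-realizable} and construct a witness to $\exists s.~ G_I(s) \land \viable(s)$. From the hypothesis $\basecheck(n)$ we immediately obtain some $s_0$ with $G_I(s_0)$ and $\viable_n(s_0)$; the substantive task is to lift this finite viability to full (coinductive) viability using $\extendcheck(n)$.

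My approach is coinduction with the candidate relation $R(s) := \viable_n(s)$. The coinductive step I must discharge is
\[
\forall s.~ \viable_n(s) \Rightarrow \forall i.~ A(s,i) \Rightarrow \exists s'.~ G_T(s,i,s') \land \viable_n(s').
\]
Given such $s$ and $i$, one unfolding of $\viable_n(s)$ supplies an $s'$ with $G_T(s,i,s') \land \viable_{n-1}(s')$. The critical sub-claim is that this $s'$ in fact satisfies $\viable_n(s')$, not merely $\viable_{n-1}(s')$; the extra ``depth'' of one level is precisely what $\extendcheck(n)$ will supply.

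To establish $\viable_n(s')$ I would unfold its nested quantifier structure level by level. At each level $k \in \{1, \ldots, n-1\}$ the required existential witness is supplied by the $k$-th layer of $\viable_{n-1}(s')$, producing a growing valid path $s' \to s_1 \to \cdots \to s_{n-1}$; prefixed by the edge $s \to s'$ this is a valid $n$-step path from $s$. At the final ($n$-th) level of $\viable_n(s')$ I would invoke $\extend_n(s)$, which holds by $\extendcheck(n)$, applied to exactly this $n$-step path, to obtain the last witness. Assembling these levels gives $\viable_n(s')$, closing the coinductive obligation; combined with $\viable_n(s_0)$ from $\basecheck(n)$ this yields $\viable(s_0)$, and Theorem~\ref{thm:alt-realizable} then delivers realizability.

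The main obstacle is purely bookkeeping: the heavily nested $\forall\exists$ structure of $\viable_n$ must be unfolded carefully so that the formula supplied by $\viable_{n-1}(s')$ lines up position-for-position with all but the last level of $\viable_n(s')$, and so that the path handed to $\extend_n(s)$ is exactly the valid $n$-edge continuation from $s$. No new conceptual ideas are required beyond this alignment, and the $n = 0$ case degenerates cleanly (with $\viable_0 \equiv \top$ and $\extendcheck(0)$ being a direct one-step response property).
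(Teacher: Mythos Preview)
Your proposal is correct and follows essentially the same route as the paper: both isolate the key ``shifting'' fact that, under $\extendcheck(n)$, from $\viable_n(s)$ and a valid input one can reach some $s'$ with $G_T(s,i,s')\land\viable_n(s')$, then use this as the coinductive step with invariant $\viable_n$ and finish via Theorem~\ref{thm:alt-realizable}. The only difference is packaging: the paper states the shifting fact as a standalone lemma and proves it by induction on $n$ (using that $\extend_{n}(s)$ together with a valid edge $s\to s'$ yields $\extend_{n-1}(s')$), whereas you unfold the $n$ levels of $\viable_n(s')$ directly and plug in $\extend_n(s)$ at the last level; these are the same argument, with your level-by-level unfolding being the unrolled form of the paper's induction.
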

\begin{proof}
First we show how $\extend_n(s)$ can be used to shift $\viable_n(s)$
forward. The following is proved by induction on $n$.
\begin{equation*}
  \forall s, n, i.~ \extend_n(s) \land \viable_n(s) \land A(s, i)
  \Rightarrow \exists s'.~ G_T(s, i, s') \land \viable_n(s')
\end{equation*}
Using this lemma we can show the following by coinduction.
\begin{equation*}
  \forall s, n.~ \viable_n(s) \land \extendcheck(n) \Rightarrow
  \viable(s)
\end{equation*}
The result then follows from Theorem~\ref{thm:alt-realizable}.
\end{proof}

\begin{corollary}[Soundness of Realizability Algorithm]
  The Realizability Algorithm is sound.
\end{corollary}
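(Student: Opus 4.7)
The plan is to reduce the corollary directly to the two preceding soundness theorems by a simple case analysis on the control flow of the Realizability Algorithm. Soundness here means: whenever the algorithm returns an answer, that answer is correct. There are only two return sites, so it suffices to inspect each one and witness the hypotheses of the relevant theorem.

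First I would trace through the algorithm to identify, at each return site, exactly which propositions about $n$, $\basecheck$, and $\extendcheck$ are known at that program point. At the ``unrealizable'' return, the guard $\neg\basecheck(n)$ has just been tested and found true, so $\exists n.~\neg\basecheck(n)$ holds. Applying the ``Soundness of unrealizable result'' theorem then yields that the contract is not realizable, matching the returned answer. At the ``realizable'' return, the \textbf{if}-branch was false and the \textbf{elseif}-branch true, so both $\basecheck(n)$ and $\extendcheck(n)$ hold for that particular $n$, hence $\exists n.~\basecheck(n) \land \extendcheck(n)$. Applying the ``Soundness of realizable result'' theorem then yields that the contract is realizable, again matching the returned answer.

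Finally I would note that the algorithm may of course fail to terminate, but soundness is a partial-correctness property (it constrains only returned answers), so non-termination is not a counterexample to soundness and no further argument is required. I do not expect any real obstacle: the entire content of the corollary is the bookkeeping observation that the two return sites of the algorithm correspond exactly to the hypotheses of the two preceding theorems, and that the branches of the \textbf{if}/\textbf{elseif} provide precisely the needed propositional facts about $\basecheck(n)$ and $\extendcheck(n)$ at the moment of return.
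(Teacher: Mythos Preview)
Your proposal is correct and matches the paper's approach: the paper states the corollary immediately after the two soundness theorems with no further proof, treating it as an obvious consequence of the case analysis on the two return sites that you spell out. Your observation that non-termination is compatible with soundness is also appropriate and implicit in the paper's framing.
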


Due to the approximations used to define the base and extends check,
the algorithm is incomplete. The following two examples show how both
realizable and unrealizable contracts may send the algorithm into an
infinite loop.

\begin{example}[Incompleteness of ``realizable'' result]
\label{ex:realizable-incomplete}
Suppose the type $state$ is integers. Consider the contract:
\begin{align*}
  A(s, i) &= \top & G_I(s) &= \top & G_T(s, i, s') &= (s \neq 0)
\end{align*}
This contract is realizable by, for example, a system that starts in
state $1$ and always transitions into the same state. Yet, for all
$n$, $\extendcheck(n)$ fails since one can take a path of length $n$
which ends at state $0$. This path cannot be extended.
\end{example}

\begin{example}[Incompleteness of ``unrealizable'' result]
\label{ex:unrealizable-incomplete}
Suppose the type $state$ is integers. Consider the contract:
\begin{align*}
  A(s, i) &= \top & G_I(s) &= (s \geq 0) & G_T(s, i, s') &= (s' = s - 1
  \land s' \geq 0)
\end{align*}
This contract is not realizable since in any realization the state $0$
would be reachable, but the contract does not allow a transition from
state $0$. However, $\basecheck(n)$ holds for all $n$ by starting in
state $s = n$.
\end{example}

Implementing this algorithm requires a way of automatically checking
the formulas $\basecheck(n)$ and $\extendcheck(n)$ for validity. This
can be done in an SMT-solver that supports quantifiers over the
language the contract is expressed in. Checking $\extendcheck(n)$ is
rather nice in this setting since it has only a single quantifier
alternation. Moreover, using an incremental SMT-solver one can reuse
much of the work done to check $\extendcheck(n)$ to also check
$\extendcheck(n+1)$. However, $\basecheck(n)$ is problematic.
First, it has $2n$ quantifier alternations which puts even small cases
outside the reach of modern SMT-solvers. Second, the quantifiers make
it impractical to reuse the results of $\basecheck(n)$ in checking
$\basecheck(n+1)$. Finally, due to the quantifiers, a counterexample
to $\basecheck(n)$ would be difficult to relay back to the user. Thus
we need a simplification of $\basecheck(n)$ in order to make our
algorithm practical.

\begin{definition}[Simplified base check]
Define a simplified base check which checks that any path of length
$n$ from an initial state can be extended one step.
  \begin{equation*}
    \basecheck'(n) = \forall s.~ G_I(s) \Rightarrow \extend_n(s)
  \end{equation*}
\end{definition}

First, note that this check has a single quantifier alternation.
Second, this check can leverage the incremental features in an
SMT-solver to use the results of $\basecheck'(n)$ in checking
$\basecheck'(n+1)$. Finally, when this check fails it can return a
counterexample which is a trace of a system realizing the contract for
$n$ steps, but then becoming stuck. This provides very concrete and
useful feedback to system developers. The correctness of this check is
captured by the following theorem.

\begin{theorem}[One-way soundness of simplified base check]
  \label{thm:basecheck'}
  \begin{equation*}
    (\exists s.~ G_I(s)) \Rightarrow \forall n.~ (\forall k\leq n.~ \basecheck'(k)) \Rightarrow \basecheck(n)
  \end{equation*}
\end{theorem}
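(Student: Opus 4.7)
The plan is to fix an initial state $s_0$ with $G_I(s_0)$ (which exists by hypothesis) and show that this same state witnesses $\basecheck(n)$ for every $n$; this reduces the theorem to establishing $\viable_n(s_0)$ under the assumption that $\basecheck'(k)$ holds for all $k \leq n$. Intuitively, $\viable_n(s_0)$ is a finite formula with $n$ quantifier alternations, and the idea is to peel off one alternation at a time, using $\basecheck'(k)$ at stage $k$ to discharge the existential introduced at that stage.

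To make the induction go through cleanly, I would prove the strengthened lemma: for every $m \leq n$ and every valid partial trace $s_0, i_1, s_1, \ldots, i_m, s_m$ starting from our fixed initial $s_0$ (where valid means $A(s_{j-1}, i_j) \land G_T(s_{j-1}, i_j, s_j)$ at each step), the conclusion $\viable_{n-m}(s_m)$ holds. Instantiating $m = 0$ yields the required $\viable_n(s_0)$. Proceed by induction on $j = n - m$. The base case $j = 0$ is immediate since $\viable_0$ holds vacuously. For the step, $\viable_{j+1}(s_m)$ demands that for every $i$ with $A(s_m, i)$ some successor $s'$ satisfies $G_T(s_m, i, s') \land \viable_j(s')$. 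To obtain $s'$, apply $\basecheck'(m)$ to $s_0$: since $G_I(s_0)$, we have $\extend_m(s_0)$, and feeding it the existing $m$-step trace together with $i$ yields $s'$ with $G_T(s_m, i, s')$. Appending $(i, s')$ produces a valid $(m+1)$-step trace from $s_0$, and the inductive hypothesis at depth $m + 1$ delivers $\viable_j(s')$ as needed.

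The main obstacle is resisting a naive induction on $n$ at the top level. Such an induction would need to reapply the claim at the successor state $s_1$, but $\basecheck'$ licenses extension only from \emph{initial} states, and $s_1$ is in general not initial. The strengthened lemma sidesteps this by always carrying the full historical trace back to the original $s_0$, so that each invocation of $\basecheck'(m)$ references the same initial state we started from; the historical trace $s_0, i_1, \ldots, s_m$ is exactly what turns $\extend_m(s_0)$ into a usable extension fact at $s_m$. Once the induction is set up on $n - m$ rather than $n$, the remaining work is purely syntactic unfolding of $\viable_{n-m}$ and $\extend_m$.
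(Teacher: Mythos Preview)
Your proof is correct but organized differently from the paper's. The paper isolates the state-local lemma $\forall s, n.~ \extend_n(s) \land \viable_n(s) \Rightarrow \viable_{n+1}(s)$, proves it by induction on $n$, and then runs a second outer induction on $n$: from $\viable_0(s_0)$ and $\basecheck'(n)$ (which supplies $\extend_n(s_0)$ since $G_I(s_0)$), the lemma pumps $\viable_n(s_0)$ up to $\viable_{n+1}(s_0)$. You instead carry the full trace from $s_0$ and do a single induction on $n-m$, invoking $\basecheck'(m)$ at depth $m$ to extend. Both rely on the same underlying mechanism---prepending a known valid prefix so that $\extend$ applies---but the paper's lemma is more modular (it mentions neither $G_I$ nor any distinguished initial state and could be reused elsewhere), while your version collapses two inductions into one and makes the role of each $\basecheck'(m)$ explicit. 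Your remark that a naive top-level induction on $n$ fails is correct for the bare claim, but note that the paper \emph{does} induct on $n$ at the top level; its auxiliary lemma is stated for arbitrary $s$, which is precisely what lets that induction go through without requiring successor states to be initial.
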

\begin{proof}
We first prove the following by induction on $n$:
\begin{equation*}
  \forall s, n.~ \extend_n(s) \land \viable_n(s) \Rightarrow \viable_{n+1}(s)
\end{equation*}
The final result follows using this and induction on $n$.
\end{proof}

Thus replacing $\basecheck(n)$ in the realizability algorithm with
$\basecheck'(n)$ preserves soundness of the ``realizability'' result.
However, because the implication in Theorem~\ref{thm:basecheck'} is
only in one direction, the algorithm is no longer sound for the
``unrealizable'' result. That is, it may return a counterexample
showing $n$ steps of a realization of the contract that gets into a
stuck state. The following example makes this point explicit.

\begin{example}
Consider again Example~\ref{ex:realizable-incomplete} where
the type $state$ is integers and the contract is:
\begin{align*}
  A(s, i) &= \top & G_I(s) &= \top & G_T(s, i, s') &= (s \neq 0)
\end{align*}
As before, this contract is easily realizable. However,
$\basecheck'(n)$ fails for all $n$ since it will consider a path
starting at state $n$ and transitioning $n$ steps to state $0$ where
no more transitions are possible.
\end{example}

The benefits of this second version of the algorithm outweigh its
costs. The cases where a contract is realizable, yet fails the
modified base check seems unlikely in practice. We have encountered
none in our case studies. Moreover, when a contract does spuriously
fail the simplified base check, it can almost always be rewritten into
a form which would pass.



\vspace{-0.1in}
\section{Implementation}
\label{sec:implementation}
\vspace{-0.1in}

We have built an implementation of the realizability algorithm as an
extension to JKind~\cite{jkind}, a re-implementation of the KIND model
checker~\cite{Hagen08:kind} in Java. Our tool is called JRealizability
and is packaged with the latest release of JKind. The model's behavior
is described in the Lustre language, which is the native input
language of JKind and is used as an intermediate language for the
AGREE tool suite.

We unroll the transition relation defined by the Lustre model into SMT
problems (one for the base check and another for the extend check)
which can be solved in parallel. We use the SMT-LIB Version 2 format
which most modern SMT solvers support. The most significant issue for
SMT solvers involves quantifier support, so we use the Z3 SMT
solver~\cite{DeMoura08:z3} which has good support for reasoning over
quantifiers and incremental search. The tool is often able to provide
an answer for models containing integer and real-valued variables very
quickly (in less than a second). Because of the use of quantifiers
over a range of theories, it is possible that for one of the checks,
Z3 returns {\tt unknown}; in this case, we discontinue analysis. In
addition, because our realizability check is incomplete, the tool
terminates analysis when either a timeout or a user-specified max
unrolling depth (default: 200) is reached. In this case we are able to
report how far the base check reached which may provide some
confidence in the realizability of the system.



\vspace{-0.1in}
\section{Case Studies}
\label{sec:case_studies}
\vspace{-0.1in}

As a part of testing the algorithm in actual components, we examined
three different cases: a quad-redundant flight control system, a medical
infusion pump, and a simple microwave controller. In this section,
we provide a brief description of each case study and summarize the
results in Table~\ref{table:case-studies} at the end of the section.

\vspace{-0.1in}
\subsection{Quad-Redundant Flight Control System}
\vspace{-0.1in}

We ran our realizability analysis on a Quad-Redundant Flight Control
System (QFCS) for NASA's Transport Class Model (TCM) aircraft
simulation. We were provided with a set of English language
requirements for the QFCS components and a description of the
architecture. We modeled the architecture in AADL and the component
requirements as assume/guarantee contracts in AGREE. As the name
suggests, the QFCS consists of four redundant Flight Control Computers
(FCCs). Each FCC contains components for handling faults and computing
actuator signal values. One of these components is the Output Signal
Analysis and Selection component (OSAS). The OSAS component is
responsible for determining the output gain for signals coming from
the control laws and going to the actuators. The output signal gain is
determined based on the number of other faulty FCCs or based on
failures within the FCC containing the OSAS component. The OSAS
component contains 17 English language requirements including the
following:
\begin{requirement}
\textbf{OSAS-S-170} -- If the local Cross Channel Data Link (CCDL) has
failed, OSAS shall set the local actuator command gain to 1 (one).
\label{req:osas-ccdl-fail}
\end{requirement}
\begin{requirement}
\textbf{OSAS-S-240} -- If OSAS has been declared failed by CCDL, OSAS
shall set the actuator command gain to 0 (zero).
\label{req:osas-fail}
\end{requirement}
We formalized these requirements using the following guarantees:
\begin{align*}
\mathbf{guarantee} &\colon \mathsf{ccdl\_failed} \Rightarrow
(\mathsf{fcc\_gain'} = 1) \\
\mathbf{guarantee} &\colon \mathsf{osas\_failed} \Rightarrow
(\mathsf{fcc\_gain'} = 0)
\end{align*}
These guarantees are contradictory in the case when the local CCDL has
failed and the local CCDL reports to the OSAS that the OSAS has
failed. This error eluded the engineers who originally drafted the
requirements as well as the engineers who formalized them. In this
case, there should be an assumption that if the CCDL has failed then
it will not report to the OSAS that the OSAS has failed. This was not
part of the original requirements. However, AGREE's realizability
analysis was able to identify the error and provide a counterexample.

\vspace{-0.1in}
\subsection{Medical Device Example}
\vspace{-0.1in}

Our realizability tool was also used to verify the realizability of
the components in the Generic Patient Controlled Analgesia infusion pump system
that was described in~\cite{Murugesan2014}.
The controller consists of six subcomponents that were given as input for the
tool to verify the requirements described inside. While five of the models were
proven to be realizable, a subtly incorrect requirement definition was found in
the contract for the controller's infusion manager.

\begin{requirement}
\textbf{GPCA-1} - The mode range of the controller shall be one of nine
different modes. If the controller is in one of the first two modes the
commanded flow rate shall be zero.
\label{req:gpca1}
\end{requirement}
\begin{equation}
\begin{aligned}
&\mathbf{guarantee} \colon \\
  &\hspace{2em}(\mathsf{IM\_OUT.Current\_System\_Mode'} \geq 0)~\land~ \\
  &\hspace{2em}(\mathsf{IM\_OUT.Current\_System\_Mode'} \leq 8)~\land~ \\
  &\hspace{2em}(\mathsf{IM\_OUT.Current\_System\_Mode'} = {0} \Rightarrow~ \\
      &\hspace{4em}\mathsf{IM\_OUT.Commanded\_Flow\_Rate'} = 0)\land~ \\
  &\hspace{2em}(\mathsf{IM\_OUT.Current\_System\_Mode'} = 1 \Rightarrow~ \\
      &\hspace{4em}\mathsf{IM\_OUT.Commanded\_Flow\_Rate'} = 0)
\end{aligned}
\label{guar:gpca1}
\end{equation}
\begin{requirement}
\textbf{GPCA-2} - Whenever the alarm subsystem has detected a high severity hazard, then Infusion Manager shall never infuse drug at a rate more than the specified Keep Vein Open rate.
\label{req:gpca}
\end{requirement}
\begin{equation}
\begin{aligned}
&\mathbf{guarantee} \colon \\
&\hspace{2em}(\mathsf{TLM\_MODE\_IN.System\_On'} \land~ \\
&\hspace{4em}\mathsf{ALARM\_IN.Highest\_Level\_Alarm'} = 3) \Rightarrow \\
&\hspace{2em}(\mathsf{IM\_OUT.Commanded\_Flow\_Rate'} =
  \mathsf{CONFIG\_IN.Flow\_Rate\_KVO'})
\end{aligned}
\label{guar:gpca2}
\end{equation}

\sloppypar
The erroneously defined guarantee~(\ref{guar:gpca2}) tries to assert that the $\mathsf{IM\_OUT.Commanded\_Flow\_Rate}$ to some (potentially non-zero) $\mathsf{Flow\_Rate\_KVO}$ if the alarm input is 3; however, this may occur when the $\mathsf{IM\_OUT.Current\_System\_Mode}$ is computed to be zero or one, in which case the flow rate is commanded to be 0.  While  discovering and fixing the problem was not difficult, the error was not discovered by the regular consistency check in AGREE.

\vspace{-0.1in}
\subsection{Microwave Assignment}
\vspace{-0.1in}

The realizability tool was used to check the contracts for the microwave models produced by the graduate student teams described in Section~\ref{sec:motivation} that provided the initial motivation for this work.  The microwave consists of two subsystems that manage the cooking element and display panel of the device. Table~\ref{table:case-studies} shows the corresponding results for each team, named as MT1, MT2, etc. While
every team but one managed to provide an implementable set of requirements for the microwave's mode
controller, there were several interesting cases involving the display control component.  For space reasons, we highlight only one here.

\begin{requirement}
\textbf{Microwave-1} - While the microwave is in cooking mode,
{\sf seconds\_to\_cook} shall decrease.
\label{req:mic1}
\end{requirement}

\begin{requirement}
\textbf{Microwave-3} - When the keypad is initially enabled, if no digits are
pressed, the value shall be zero.
\label{req:mic3}
\end{requirement}
Team 6 formalized these requirements as
\begin{align*}
&\mathbf{guarantee} \colon (\mathsf{cooking\_mode'} = 2) \Rightarrow
(\mathsf{seconds\_to\_cook'} = \mathsf{seconds\_to\_cook} - 1) \\
&\mathbf{guarantee} \colon (\lnot \mathsf{keypad\_enabled} \land \mathsf{keypad\_enabled'} \land
  \lnot \mathsf{any\_digit\_pressed'}) \Rightarrow \\
  & \ \ \ \ (\mathsf{seconds\_to\_cook'} = 0)
\end{align*}

\noindent In the counterexample provided, the state where the microwave is cooking ($\mathsf{cooking\_mode} = 2$) and no digit is pressed creates a conflict regarding which value is assigned to the
$\mathsf{seconds\_to\_cook}$ variable: should it decrease by one, or be
assigned to zero?  This counterexample is interesting because it indicates a missing assumption on the environment: the keypad is not enabled when the cooking mode is 2 (cooking).  Without this assumption about the inputs, the guarantees are not realizable.

\begin{table}[H]
\resizebox{\textwidth}{!}{%
\begin{tabular}{|c|c|c|c|c|}
\hline
\textbf{Case study} & \textbf{Model} & \textbf{Result} & \textbf{Time elapsed (seconds)} & \textbf{\begin{tabular}[c]{@{}c@{}}Base check depth \\ (\# of steps)\end{tabular}} \\ \hline
QFCS & FCS & realizable & 1.762 & 0 \\ \hline
QFCS & FCC & unrealizable & 0.981 & 1 \\ \hline
GPCA & Infusion Manager & unrealizable & 0.2 & 1 \\ \hline
GPCA & Alarm & realizable & 0.316 & 0 \\ \hline
GPCA & Config & realizable & 0.102 & 0 \\ \hline
GPCA & OutputBus & realizable & 0.201 & 0 \\ \hline
GPCA & System\_Status & realizable & 0.203 & 0 \\ \hline
GPCA & Top\_Level & realizable & 0.103 & 0 \\ \hline
MT 1 & Mode Control & realizable & 0.229 & 0 \\ \hline
MT 1 & Display Control & unrealizable & 0.207 & 1 \\ \hline
MT 2 & Mode Control & realizable & 0.202 & 0 \\ \hline
MT 2 & Display Control & unknown & 1000 (tool timeout) & 1 \\ \hline
MT 3 & Mode Control & realizable & 0.203 & 0 \\ \hline
MT 3 & Display Control & unrealizable & 0.202 & 1 \\ \hline
MT 4 & Mode Control & realizable & 0.202 & 0 \\ \hline
MT 4 & Display Control & unrealizable & 0.521 & 1 \\ \hline
MT 5 & Mode Control & unrealizable & 0.1 & 1 \\ \hline
MT 5 & Display Control & unrealizable & 0.222 & 1 \\ \hline
MT 6 & Mode Control & realizable & 0.201 & 0 \\ \hline
MT 6 & Display Control & unknown & 1000 (tool timeout) & 1 \\ \hline
\end{tabular}
}
\vspace{+1em}
\caption{Realizability checking results for case studies}
\label{table:case-studies}
\end{table}

\vspace{-3em}

Table~\ref{table:case-studies} contains the exact results that were
obtained during the three case studies. Every ``realizable'' result
was determined to be correct since an implementation was produced for
each of the components analyzed, ensuring the accuracy of the tool.
Every contract that was identified as ``unrealizable'' was manually
confirmed to be unrealizable, i.e., there were no spurious results.
Additionally, the number of steps that the base check required to
provide a final answer was not more than one, with the unknown results
being particularly interesting, as the tool timed out before the
solver was able to provide a concrete answer. This shows that there
is still work to be done in terms of the algorithm's scalability, as
well as an efficient way to eliminate quantifiers, making the solving
process easier for Z3.



\vspace{-0.1in}
\section {Related Work}
\label{sec:related_work}
\vspace{-0.1in}

The idea of $realizability$ has been the subject of intensive study. 
Gunter et al. refer to it using the term $\textit{relative
consistency}$ in~\cite{Gunter00}, while Pnuelli and Rosner use the term
$\textit{implementability}$ in~\cite{Pnueli89} to refer to the problem of synthesis for
propositional LTL. Additionally, the authors in~\cite{Pnueli89} proved that the
lower-bound time complexity of the problem is doubly exponential, in the worst
case. In the following years, several techniques were introduced to deal with
the synthesis problem in a more efficient way for subsets of propositional LTL
\cite{Klein10}, simple LTL formulas (\cite{Bohy12}, \cite{Tini03}), as well as
in a component-based approach \cite{Chatterjee07} and specifications based on
other temporal logics (\cite{Benes12}, \cite{Hamza10}), such as SIS
\cite{Aziz95}.
Finally, an interesting and relevant work has been done regarding the solution
to the controllability problem using in \cite{micheli_aaai_2012}
\cite{micheli_cp_2012} and \cite{micheli_constraints_2014}, which involves the
decision on the existence a strategy that assigns certain values to a set of
controllable activities, with respect to a set of uncontrollable ones.

Recent work in solving infinite game problems~\cite{Beyene:2014,BeyenePR13} 
can be specialized to the problem of realizability.  In this work, the authors 
describe a framework for analyzing arbitrary two-player games.  To 
provide proofs 
within the framework, {\em template formulas} must be provided by the user
that describe the shape of a Skolem function that is used to explicitly 
define an inductive invariant that demonstrates the realizability of a model.
Although this work is more general than ours, the applicability of the 
approach requires user-provided templates that are problem specific, so is 
not entirely automated.

The main contribution of our work is that it automatically checks the
realizability of infinite domain systems. The problem is, in general,
undecidable. Still, the application of bounded model checking can
still offer an approximate answer to the realizability problem as we
experienced by the fact that Z3 managed to solve the majority of our test
models.


\vspace{-0.1in}
\section{Conclusions and Future Work}
\label{sec:future_work}
\vspace{-0.1in}

In this paper, we have presented a new approach for determining realizability of contracts involving infinite theories using SMT solvers.  This approach allows analysis of a class of contracts that were previously not solvable using automated analysis.  The approach is both incomplete and conservative, i.e., it may return ``false positive'' results, declaring that a contract is not realizable when it could be realized.  However, it has been shown to be both fast and effective in practice on a variety of models.   

The results of this paper provide a good foundation towards further research in realizability.  In much the same way that many properties are not {\em inductive}, some contracts cannot be proven realizable using one step extensions.  We are examining alternate algorithms, similar to approaches such as IC3~\cite{bradley11}, which support property-directed invariant generation, to improve the approach presented here.  However, this requires generalizing the IC3 approach to solve quantified formulas (as well as to generalize counterexamples over quantified formulas).  We hope to demonstrate an approach involving a IC3-like algorithm in the near future.

In addition, for realizable systems, it is likely that we want to consider the {\em synthesis} problem, which we have not explicitly considered in this paper.  Synthesis aims to construct a concrete implementation of the contract, rather than determine its existence.  It is known for propositional systems that the synthesis problem is equivalent in complexity to the realizability problem~\cite{Pnueli89}, but it is not known (to us) whether this equivalence is true in the infinite-state case.


\vspace{-0.1in}
\subsubsection{Acknowledgments.}
This work was funded by DARPA and AFRL under contract FA8750-12-9-0179 (Secure Mathematically-Assured Composition of Control Models), and by NASA under contract NNA13AA21C (Compositional Verification of Flight Critical Systems), and by NSF under grant CNS-1035715 (Assuring the safety, security, and reliability of medical device cyber physical systems).
\vspace{-0.1in}

\bibliography{document}

\begin{thebibliography}{10}

\bibitem{SAE:AADL}
SAE-AS5506:
\newblock Architecture Analysis and Design Language.
\newblock SAE (2004)

\bibitem{Friedenthal08:sysml}
Friedenthal, S., Moore, A., Steiner, R.:
\newblock A Practical Guide to SysML: Systems Modeling Language.
\newblock Morgan Kaufmann Publishers Inc., San Francisco, CA, USA (2008)

\bibitem{AUTOSAR}
Consortium, A.:
\newblock Automotive Open System Architecture (AUTOSAR) Revision 4.2.1.
\newblock AUTOSAR (2014)

\bibitem{Gomez11:AADL}
Varona-Gomez, R., Villar, E.:
\newblock Aadl simulation and performance analysis in systemc.
\newblock In: Engineering of Complex Computer Systems, 2009 14th IEEE
  International Conference on. (2009)  323--328

\bibitem{Bozzano:2011:AADL}
Bozzano, M., Cimatti, A., Katoen, J.P., Nguyen, V.Y., Noll, T., Roveri, M.:
\newblock Safety, dependability and performance analysis of extended aadl
  models.
\newblock Comput. J. \textbf{54} (2011)  754--775

\bibitem{Apvrille13:security}
{A}pvrille, L., {R}oudier, Y.:
\newblock {S}ys{ML}-{S}ec: {A} model-driven environment for developing secure
  embedded systems.
\newblock In: {SAR}-{SSI} 2013, 8{\`e}me {C}onf{\'e}rence sur la
  {S}{\'e}curit{\'e} des {A}rchitectures {R}{\'e}seaux et des {S}yst{\`e}mes
  d'{I}nformation, 16-18 {S}eptembre 2013, {M}ont-de-{M}arsan, {F}rance,
  {M}ont-de-{M}arsan, {FRANCE} (2013)

\bibitem{Bozzano14:AADL}
Bozzano, M., Cimatti, A., Katoen, J.P., Katsaros, P., Mokos, K., Nguyen, V.Y.,
  Noll, T., Postma, B., Roveri, M.:
\newblock Spacecraft early design validation using formal methods.
\newblock Reliability Engineering and System Safety \textbf{132} (2014)

\bibitem{Whalen13:WhatHow:TwinPeaksIEEESoftware}
Whalen, M.W., Gacek, A., Cofer, D., Murugesan, A., Heimdahl, M.P., Rayadurgam,
  S.:
\newblock Your what is my how: Iteration and hierarchy in system design.
\newblock Software, IEEE \textbf{30} (2013)  54--60

\bibitem{rushby2011new}
Rushby, J.:
\newblock New challenges in certification for aircraft software.
\newblock In: Proceedings of the ninth ACM Int'l Conf. on Embedded software,
  ACM (2011)  211--218

\bibitem{Miller06:Shalls}
Miller, S.P., Tribble, A.C., Whalen, M.W., Heimdahl, M.P.E.:
\newblock Proving the shalls: Early validation of requirements through formal
  methods.
\newblock Int. J. Softw. Tools Technol. Transf. \textbf{8} (2006)  303--319

\bibitem{Hammond01:WiW}
Hammond, J., Rawlings, R., Hall, A.:
\newblock Will it work? [requirements engineering].
\newblock In: Requirements Engineering, 2001. Proceedings. Fifth IEEE Int'l
  Symposium on. (2001)  102 --109

\bibitem{NFM2012:CoGaMiWhLaLu}
Cofer, D.D., Gacek, A., Miller, S.P., Whalen, M.W., LaValley, B., Sha, L.:
\newblock Compositional verification of architectural models.
\newblock In Goodloe, A.E., Person, S., eds.: Proceedings of the 4th NASA
  Formal Methods Symposium (NFM 2012). Volume 7226., Berlin, Heidelberg,
  Springer-Verlag (2012)  126--140

\bibitem{Pnueli89}
Pnueli, A., Rosner, R.:
\newblock On the {S}ynthesis of a {R}eactive {M}odule.
\newblock Proceedings of the 16th ACM SIGPLAN-SIGACT symposium on Principles of
  Programming Languages (POPL'89) (1989)  179--190

\bibitem{Bohy12}
Bohy, A., Bruyère, V., Filiot, E., Jin, N., Raskin, J.F.:
\newblock Acacia+, a tool for {LTL} {S}ynthesis.
\newblock Proceedings of the 24th {I}nternational {C}onference on {C}omputer
  {A}ided {V}erification (CAV'12) (2012)  652--657

\bibitem{Hamza10}
Hamza, J., Jobstmann, B., Kuncak, V.:
\newblock Synthesis for {R}egular {S}pecifications over {U}nbounded {D}omains.
\newblock Proceedings of the 2010 Conference on Formal Methods in
  Computer-Aided Design (2010)  101--109

\bibitem{Chatterjee07}
Chatterjee, K., Henzinger, T.A.:
\newblock Assume-{G}uarantee {S}ynthesis.
\newblock Proceedings of the 13th International Conference on Tools and
  Algorithms for the Construction and Analysis of Systems (TACAS'07) (2007)
  261--275

\bibitem{hilt2013}
Murugesan, A., Whalen, M.W., Rayadurgam, S., Heimdahl, M.P.:
\newblock Compositional verification of a medical device system.
\newblock In: ACM Int'l Conf. on High Integrity Language Technology (HILT)
  2013, ACM (2013)

\bibitem{Katis:machine}
Katis, A., Gacek, A., Whalen, M.W.:
\newblock Machine-checked proofs for realizability checking algorithms (2015)
  Submitted \url{http://arxiv.org/abs/1502.01292}.

\bibitem{jkind}
Gacek, A.:
\newblock {JK}ind - a {J}ava implementation of the {KIND} model checker.
\newblock \url{https://github.com/agacek/jkind} (2014)

\bibitem{Hagen08:kind}
Hagen, G.:
\newblock Verifying safety properties of {L}ustre programs: an {SMT}-based
  approach.
\newblock PhD thesis, University of Iowa (2008)

\bibitem{DeMoura08:z3}
De~Moura, L., Bj{\o}rner, N.:
\newblock {Z3}: An efficient {SMT} solver.
\newblock In: Tools and Algorithms for the Construction and Analysis of
  Systems.
\newblock Springer (2008)  337--340

\bibitem{Murugesan2014}
Murugesan, A., Sokolsky, O., Rayadurgam, S., Whalen, M., Heimdahl, M., Lee, I.:
\newblock Linking {A}bstract {A}nalysis to {C}oncrete {D}esign: {A}
  {H}ierarchical {A}pproach to {V}erify {M}edical {CPS} {S}afety.
\newblock Proceedings of ICCPS'14 (2014)

\bibitem{Gunter00}
Gunter, C.A., Gunter, E.L., Jackson, M., Zave, P.:
\newblock A {R}eference model for {R}equirements and {S}pecifications.
\newblock IEEE Software \textbf{17} (2000)  37--43

\bibitem{Klein10}
Klein, U., Pnueli, A.:
\newblock Revisiting {S}ynthesis of {GR}(1) {S}pecifications.
\newblock Proceedings of the 6th International Conference on Hardware and
  Software: Verification and Testing (HVC'10) (2010)  161--181

\bibitem{Tini03}
Tini, S., Maggiolo-Schettini, A.:
\newblock Compositional {S}ynthesis of {G}eneralized {M}ealy {M}achines.
\newblock Fundamenta Informaticae \textbf{60} (2003)  367--382

\bibitem{Benes12}
Beneš, N., Černá, I., Štefaňák, F.:
\newblock Factorization for {C}omponent-{I}nteraction {A}utomata.
\newblock Proceedings of the 38th International Conference on Current Trends in
  Theory and Practice of Computer Science (2012)  554--565

\bibitem{Aziz95}
Aziz, A., Balarin, F., Braton, R., Sangiovanni-Vincentelli, A.:
\newblock Sequential {S}ynthesis using {SIS}.
\newblock Proceedings of the 1995 IEEE/ACM International Conference on
  Computer-Aided Design (ICCAD'95) (1995)  612--617

\bibitem{micheli_aaai_2012}
Cimatti, A., Micheli, A., Roveri, M.:
\newblock Solving temporal problems using {SMT}: Weak controllability.
\newblock In: AAAI. (2012)  448--454

\bibitem{micheli_cp_2012}
Cimatti, A., Micheli, A., Roveri, M.:
\newblock Solving temporal problems using {SMT}: Strong controllability.
\newblock In: CP. (2012)  248--264

\bibitem{micheli_constraints_2014}
Cimatti, A., Micheli, A., Roveri, M.:
\newblock Solving strong controllability of temporal problems with uncertainty
  using {SMT}.
\newblock Constraints (2014)

\bibitem{Beyene:2014}
Beyene, T., Chaudhuri, S., Popeea, C., Rybalchenko, A.:
\newblock A constraint-based approach to solving games on infinite graphs.
\newblock In: Proceedings of the 41st ACM SIGPLAN-SIGACT Symposium on
  Principles of Programming Languages. POPL '14, New York, NY, USA, ACM (2014)
  221--233

\bibitem{BeyenePR13}
Beyene, T.A., Popeea, C., Rybalchenko, A.:
\newblock Solving existentially quantified {H}orn clauses.
\newblock In: Computer Aided Verification - 25th International Conference,
  {CAV} 2013, Saint Petersburg, Russia, July 13-19, 2013. Proceedings. (2013)
  869--882

\bibitem{bradley11}
Bradley, A.:
\newblock {SAT}-based model checking without unrolling.
\newblock VMCAI (2011)

\end{thebibliography}
\bibliographystyle{splncs}

\end{document}